\documentclass[letterpaper, 10 pt, conference]{ieeeconf}
\IEEEoverridecommandlockouts
\overrideIEEEmargins

\usepackage{enumitem}
\usepackage{romannum}
\usepackage{textcomp}
\usepackage{balance}
\usepackage[font={small}]{caption}
\usepackage{subcaption}
\usepackage{array}
\usepackage{textcomp}
\usepackage{mathtools, nccmath}
\usepackage{graphicx}
\usepackage{amsfonts}
\usepackage{amsmath}
\usepackage{amssymb}
\usepackage{algorithm}
\usepackage{algorithmic}
\usepackage{hyperref}
\usepackage{tikz}
\usepackage{arydshln}
\usepackage{multirow}
\usepackage{bm}
\usepackage{epstopdf}
\usepackage{cite}
\usepackage{siunitx}
\usepackage{lipsum}
\usepackage{wrapfig}


\usepackage{amsthm}



\newtheorem{theorem}{Theorem}

\newtheorem{remark}{Remark}
\theoremstyle{definition}
\newtheorem{definition}{Definition}
\newtheorem{problem}{Problem}

\title{\LARGE \bf
        A Predictive and Sampled-Data Barrier Method for Safe and Efficient Quadrotor Control  
}

\author{Ming Gao$^{1}$,  Zhanglin Shangguan$^{1}$, Shuo Liu$^{2}$, Liang Wu$^{3}$, Bo Yang$^{1}$, Wei Xiao$^{3}$
\thanks{$^{1}$Ming Gao, Zhanglin Shangguan and Bo Yang are with the School of Automation and Intelligent Sensing, Shanghai Jiao Tong University, Shanghai 200240,  China
        {\tt\small (e-mail: ming\_gao@sjtu.edu.cn; ditto331@sjtu.edu.cn; bo.yang@sjtu.edu.cn)}}%
\thanks{$^{2}$Shuo Liu is with the Dept. Mechanical Engineering, Boston University, Brookline, MA, 02215, USA
        {\tt\small (e-mail: liushuo@bu.edu)}}%
\thanks{$^{3}$Liang Wu and Wei Xiao are with MIT, Cambridge, MA 02139, USA 
        {\tt\small (e-mail: liangwu@mit.edu, weixy@mit.edu)}}%
}

\begin{document}

\maketitle
\thispagestyle{empty}
\pagestyle{empty}

\begin{abstract}
This paper proposes a cascaded control framework for quadrotor trajectory tracking with formal safety guarantees. First, we design a controller consisting of an outer-loop position model predictive control (MPC) 
and an inner-loop nonlinear attitude control, enabling decoupling of position safety and yaw orientation. Second, since quadrotor safety constraints often involve high relative degree, we adopt high order control barrier functions (HOCBFs) to guarantee safety. To employ HOCBFs in the MPC formulation that has formal guarantees, we extend HOCBFs to sampled-data HOCBF (SdHOCBFs) by introducing compensation terms, ensuring safety over the entire sampling interval. We show that embedding SdHOCBFs as control-affine constraints into the MPC formulation guarantees both safety and optimality while preserving convexity for real-time implementations. Finally, comprehensive simulations are conducted to demonstrate the safety guarantee and high efficiency of the proposed method compared to existing methods.
\end{abstract}

\section{INTRODUCTION}
Quadrotors are becoming increasingly prevalent in many applications~\cite{emranquadrotorsurvey}, including agricultural inspection, delivery, search-and-rescue, and many others. As these robots usually operate in extremely complex environments, ensuring both control performance and safety has become paramount. Despite extensive works on safety and optimal control of quadrotors, real-time implementation of controllers with formal safety guarantees remains a challenge either due to the nonlinearity and nonconvexity of optimization, or due to the compromise on optimality. 

As the coupling of position and attitude dynamics, quadrotors are commonly controlled using cascaded control architectures that exploit the differential flatness property~\cite{kumardifferentialflatness} to simplify controller design. The authors in~\cite{Dariotiltprior} further demonstrate that the position and yaw orientation can be independently controlled through a tilt-prioritized attitude controller. However, the decoupling of position and yaw orientation has not yet been explored in the context of safe trajectory tracking for quadrotors. Due to its ability to explicitly handle constraints and predict future system information, model predictive control (MPC) is widely used in quadrotors(see~\cite{romerompcc,hanoverL1mpcc,suncomparativesurvey}). In~\cite{Andriencascadedmpc}, the authors propose a cascaded control approach with MPC as outer-loop controller, achieving almost global trajectory tracking guarantees and real-time implementation on embedded hardware. However, this method does not provide a principled way to guarantee safety of quadrotors.

Safety in the MPC formulation is typically enforced by formulating state constraints as inequality conditions~\cite{fraschMPCDC}. However, such a MPC formulation is generally nonconvex and thus cannot ensure a solution that can satisfy all the constraints while incurring additional computational burden. Linearizations in a nonlinear MPC can improve efficiency and facilitate the derivation of solutions \cite{wangmpchocbf}, but it may make the method lose safety guarantees. Transforming state constraints into control-affine linear constraints, Control barrier functions (CBFs)~\cite{amescbf} have emerged as a powerful alternative for ensuring safety effectively. Traditional CBFs can only be used for constraints with relative degree one, which are constrast to the scenarios of quadrotors where most obstacles constraints are high relative degree. In~\cite{xiaohocbf}, high order control barrier functions (HOCBFs) are introduced to handle safety constraints with arbitrary relative degrees. However, HOCBFs are developed specifically for continuous-time systems and cannot be directly applied in MPC frameworks. To apply CBFs in sampled-data systems with zero-order-hold (ZOH) inputs that are commonly used in MPC formulation, sampled-data CBFs for constraints with relative degree one have been proposed in~\cite{Andrewzohcbf,Breedenzohcbf,agrawaldiscretecbf}. The authors in~\cite{zengmpccbf,alilinearmpccbf} then incorporate the discrete control barrier functions (DCBFs) from~\cite{agrawaldiscretecbf} as constraints within MPC formulation to ensure obstacle avoidance, leading to a general nonconvex and nonlinear optimization. In~\cite{wangmpchocbf}, HOCBFs are used as a filter for MPC to enforce safety while maintaining computational feasibility. Yet, neglecting the sampled-data systems commonly present in MPC, this method lacks theoretical safety guarantees and compromises the optimality of the MPC solution.

To address these challenges, this paper proposes a cascaded quadrotor trajectory tracking control framework that embeds HOCBFs as constraints into outer-loop MPC by extending HOCBFs into sampled-data HOCBFs (SdHOCBFs), enabling real-time, optimal, and formally safe control. The contributions of this paper are at least three folds:

1) We design a control framework using MPC as outer-loop controller with an inner-loop attitude controller to achieve quadrotor trajectory tracking and the decoupling of position safety and yaw orientation. 

2) We extend HOCBFs to SdHOCBFs to address its incompatibility with sampled-data system with ZOH inputs. SdHOCBFs can guarantee safety over the entire sampling interval and be embedded into MPC as control-affine constraints, ensuring both safety and optimality while preserving convexity for real-time implementation.

3) We verify our control framework on a quadrotor and compare it with sate-of-the-art methods to show safety, efficiency, and performance improvement of our method.

\section{PRELIMINARIES AND PROBLEM FORMULATION}
\label{sec:preliminary and problem formulation}
\subsection{Quadrotor Model}
Let $\bm{x} = [\bm{p}_{IB}^\top,\bm{v}_{IB}^\top,\bm{q}_{IB}^\top, \bm{\omega}_B^\top]^\top$ be the quadrotor's state space, where $I$ represents the inertial frame, $B$ represents the body frame and the subscript $IB$ represents the transformation from the body frame to the inertial frame, $\bm{p}_{IB} \in \mathbb{R}^3, \bm{q}_{IB} \in \mathbb{SO}(3), \bm{v}_{IB} \in \mathbb{R}^3,\bm{\omega}_{B} \in \mathbb{R}^3$ are the position, unit quaternion describing the attitude of $B$ relative to $I$, velocity and body rate in the body frame, respectively. The control input is $\bm{u}=[\bm{f}_B^\top,\bm{\tau}_B^\top]^\top$, where $\bm{f}_{B}=[0,0,f_{Bz}]^\top\in \mathbb{R}^3$ is the collective thrust  and $\bm{\tau}_B \in \mathbb{R}^3$ is the body torque. For simplicity, we drop the frame indices as they are consistent throughout the paper. The quadrotor's dynamics~\cite{romerompcc} are 
\begin{subequations}
    \label{eq: quadcopter dynamics}
    \begin{align}
        \dot{\bm{p}} &= \bm{v,} \label{eq: position-dot-eq} \\
        \dot{\bm{v}} &= \bm{g} + \frac{1}{m}\bm{R}(\bm{q})\bm{f} - \bm{D}\bm{v}, \label{eq: velocity-dot-eq}\\
        \dot{\bm{q}} &= \frac{1}{2}\bm{q} \odot \begin{bmatrix}
            0 \\ \bm{\omega}
        \end{bmatrix},  \label{eq: quaternion-dot-eq} \\
        \dot{\bm{\omega}} &= \bm{J}^{-1}(\bm{\tau}-\bm{\omega} \times \bm{J}\bm{\omega}) \label{eq: omega-dot-eq},
    \end{align}
\end{subequations}
where $\odot$ represents the Hamilton quaternion multiplication, $\bm{R}(\bm{q}) \in \mathbb{SO}(3)$ is the rotation matrix from unit quaternion $\bm{q}$, $m$ is the mass, $\bm{g}$ is the gravitational acceleration, $\bm{J} \in \mathbb{R}^{3\times 3}$ is the quadrotor's inertia matrix and $\bm{D}=$diag$(d_x,d_y,d_z)$ is the aerodynamic drag matrix. 

\subsection{Safety via High Order Control Barrier Functions}
In this section, we introduce the relevant background on HOCBFs. Consider an affine control system expressed as 
\begin{equation}
    \dot{\bm{x}} = f(\bm{x})+g(\bm{x})\bm{u},
    \label{eq:affine sys}
\end{equation}
where $\bm{x} \in \mathcal{X} \subset \mathbb{R}^n$, $\bm{u} \in \mathcal{U} \subset \mathbb{R}^m$, $f:\mathbb{R}^n \rightarrow \mathbb{R}^n$ and $g:\mathbb{R}^n \rightarrow \mathbb{R}^{n \times m}$ are locally Lipschitz on $\mathcal{X}$. Moreover, $\mathcal{X}$ and $\mathcal{U}$ denote the admissable states and inputs, respectively.

The safe set $\mathcal{C} \subseteq \mathcal{X}$ is defined as the zero-superlevel set of a sufficiently differentiable function $h: \mathbb{R}^n \rightarrow \mathbb{R}$, i.e. $\mathcal{C} \coloneqq \lbrace \bm{x} \in \mathbb{R}^n : h(\bm{x}) \geq 0 \rbrace$.

For $h:\mathbb{R}^n \rightarrow \mathbb{R}$ of relative degree $\rho$, i.e. $h$ must be differentiated along~\eqref{eq:affine sys} at least $\rho$ times to make control input $\bm{u}$ appears explicitly,  we define a series of functions:
\begin{equation}
    \label{eq: hocbf functions series}
    h_i(\bm{x}) = L_f{h}_{i-1}(\bm{x}) + \alpha_i(h_{i-1}(\bm{x})), i \in \lbrace 1,\cdots,\rho-1 \rbrace, 
\end{equation}
with $h_0(\bm{x})=h(\bm{x})$, where $\alpha_i(\cdot),i \in \lbrace 1,\cdots,\rho-1 \rbrace$ are class $\mathcal{K}$ functions, $L_f$ and $L_g$ denote the Lie derivatives along $f$ and $g$, respectively. Note that $L_g$ does not appear in the recursive definitions \eqref{eq: hocbf functions series} because, by construction, the control input enters explicitly only at the final step \eqref{eq: HOCBF}. We also define a series of sets
\begin{equation}
    \label{eq: hocbf sets series}
    \mathcal{C}_i = \lbrace \bm{x} \in \mathbb{R}^n : h_i(\bm{x}) \geq 0 \rbrace, i=\lbrace 0,\cdots,\rho-1 \rbrace. 
\end{equation}

\begin{definition}[HOCBF~\cite{xiaohocbf}]
    \label{def: hocbfdef}
    Given $h_i, i\in\{0, \dots, \rho-1\}$ as in (\ref{eq: hocbf functions series}) with the corresponding sets $\mathcal{C}_i, i\in\{0, \dots, \rho-1\}$ defined as in (\ref{eq: hocbf sets series}), $h:\mathbb{R}^n \rightarrow \mathbb{R}$ is a HOCBF of relative degree $\rho$ for system \eqref{eq:affine sys} if there exist differentiable class $\mathcal{K}$ functions $\alpha_{i}, i\in\{1,\dots, \rho\}$ such that
    \begin{equation}
            \sup_{\bm{u} \in \mathcal{U}} [ L_fh_{\rho-1}(\bm{x}) + L_gh_{\rho-1}(\bm{x})\bm{u} + \alpha_{\rho} (h_{\rho-1}(\bm{x}))] \geq 0,
        \label{eq: HOCBF}
    \end{equation}
    for all $\bm{x} \in \mathcal{C}_0 \cap,\cdots,\cap \mathcal{C}_{\rho-1}$.
\end{definition}
\begin{theorem}
    \label{thm: hocbf}
    \cite{xiaohocbf} Given an HOCBF $h(\bm{x})$ from Def.~\ref{def: hocbfdef} with the sets $\mathcal{C}_i, i \in \lbrace 0, \cdots, \rho-1 \rbrace$, if $\bm{x}(t_0) \in \mathcal{C}_0 \cap \cdots \cap \mathcal{C}_{\rho-1}$, then any Lipschitz continuous controller $\bm{u}(t)$ satisfying \eqref{eq: HOCBF} $\forall t > t_0$ renders the set $\mathcal{C}_0 \cap \cdots \mathcal{C}_{\rho-1}$ forward invariant for system \eqref{eq:affine sys}, i.e. $\bm{x} \in \mathcal{C}_0 \cap \cdots \mathcal{C}_{\rho-1}, \forall t > t_0$.  
\end{theorem}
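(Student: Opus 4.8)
The plan is to prove forward invariance of $\mathcal{C}_0\cap\cdots\cap\mathcal{C}_{\rho-1}$ by a backward induction on the index $i$, running from $i=\rho-1$ down to $i=0$, with the scalar comparison lemma for class $\mathcal{K}$ differential inequalities as the workhorse at each step. The fact I would record first is: if $y:[t_0,\infty)\to\mathbb{R}$ is $C^1$, $\bar\alpha$ is a class $\mathcal{K}$ function (extended, if needed, to all of $\mathbb{R}$, so that $\bar\alpha(0)=0$), $y(t_0)\geq 0$, and $\dot y(t)\geq -\bar\alpha(y(t))$ for all $t\geq t_0$, then $y(t)\geq 0$ for all $t\geq t_0$. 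This is immediate from the comparison principle: the scalar ODE $\dot z=-\bar\alpha(z)$ with $z(t_0)=0$ has the unique solution $z\equiv 0$, and $y$ dominates $z$.

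Next I would fix the regularity that makes these inequalities meaningful and translate Def.~\ref{def: hocbfdef} into statements along the closed-loop trajectory. Since $f,g$ are locally Lipschitz and $\bm{u}(t)$ is Lipschitz continuous, the closed-loop vector field $f(\bm{x})+g(\bm{x})\bm{u}(t)$ is locally Lipschitz in $\bm{x}$ and continuous in $t$, so the solution $\bm{x}(t)$ of \eqref{eq:affine sys} is $C^1$ on its maximal interval of existence, and each $t\mapsto h_i(\bm{x}(t))$ is $C^1$ because $h$ is sufficiently differentiable. The crucial structural point is that relative degree $\rho$ forces $L_gh_{i-1}\equiv 0$ for $i\in\{1,\dots,\rho-1\}$ (this is exactly the reason noted after \eqref{eq: hocbf functions series}), so along the trajectory $\dot h_{i-1}=L_fh_{i-1}$ and hence, by \eqref{eq: hocbf functions series}, $\dot h_{i-1}(\bm{x}(t))=h_i(\bm{x}(t))-\alpha_i(h_{i-1}(\bm{x}(t)))$ for each such $i$; at the top level, \eqref{eq: HOCBF} evaluated at the applied input $\bm{u}(t)$ gives $\dot h_{\rho-1}(\bm{x}(t))\geq -\alpha_\rho(h_{\rho-1}(\bm{x}(t)))$.

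Then I would run the induction. Base case: apply the comparison lemma to $y(t)=h_{\rho-1}(\bm{x}(t))$ with $\bar\alpha=\alpha_\rho$ and the hypothesis $\bm{x}(t_0)\in\mathcal{C}_{\rho-1}$, i.e.\ $h_{\rho-1}(\bm{x}(t_0))\geq 0$, to obtain $h_{\rho-1}(\bm{x}(t))\geq 0$ for all $t\geq t_0$. Inductive step: if $h_i(\bm{x}(t))\geq 0$ for all $t\geq t_0$, then the identity above gives $\dot h_{i-1}(\bm{x}(t))\geq -\alpha_i(h_{i-1}(\bm{x}(t)))$, and since $h_{i-1}(\bm{x}(t_0))\geq 0$ by the initial-condition hypothesis, the comparison lemma yields $h_{i-1}(\bm{x}(t))\geq 0$ for all $t\geq t_0$. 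Descending from $i=\rho-1$ to $i=1$ shows $\bm{x}(t)\in\mathcal{C}_0\cap\cdots\cap\mathcal{C}_{\rho-1}$ for all $t\geq t_0$, which is the claimed forward invariance; forward completeness follows because the trajectory never leaves this set, ruling out finite escape time.

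I expect the only genuine subtlety to be the bookkeeping around the class $\mathcal{K}$ functions and regularity: extending each $\alpha_i$ off its domain $[0,a)$ (or, equivalently, restricting attention to the region $h_{i-1}\geq 0$) so that the comparison ODE is globally well posed, and being careful that $\dot h_{i-1}=L_fh_{i-1}$ genuinely holds because the control input does not appear before the $\rho$-th differentiation. Everything else is a mechanical descent through the indices, and it recovers the familiar Nagumo/comparison-lemma argument for ordinary CBFs as the special case $\rho=1$.
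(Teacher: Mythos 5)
The paper does not actually prove Theorem~\ref{thm: hocbf}: it is quoted from \cite{xiaohocbf}, and your backward induction from $i=\rho-1$ down to $i=0$, using the comparison lemma at each level together with the observation that $L_g h_{i-1}\equiv 0$ below the top level, is precisely the argument given in that reference, so your proposal is correct and takes the same route as the source. The only quibble is the closing aside that remaining in the (closed, generally unbounded) invariant set rules out finite escape time; that implication does not hold in general and is not part of the claimed statement, so it should simply be dropped.
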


\subsection{Problem Formulation}
This paper considers the problem of trajectory tracking for a quadrotor in environments with potential obstacles. We mainly address the following problem:
\begin{problem}
    \label{prob: problem formulation}
    Given the quadrotor dynamics \eqref{eq: quadcopter dynamics} and a feasible reference trajectory denoted as $\bar{\bm{y}} = [\bar{\bm{p}}^T, \bar{\psi}]^T$, where $\bar{\psi}$ is the reference yaw angle, find a control law $\bm{u}(\bm{x},\bar{\bm{y}})$ such that the resulting inputs $\bm{u}$ can minimize the following cost:
    \begin{equation}
        \label{eq : stable problem}
    \min_{\bm{u}(t)} \int_{t_0}^{t_f}  \mathcal{Q}(\bm{x}(\tau), \bar{\bm{y}}(\tau)) + \mathcal{R}(\bm{u}(\tau), \bar{\bm{y}}(\tau)) d\tau,
    \end{equation}
    where $\mathcal{Q}:\mathbb{R}^{13}\times\mathbb{R}^{4}\rightarrow \mathbb{R}$ and $\mathcal{R}:\mathbb{R}^{6}\times\mathbb{R}^{4}\rightarrow \mathbb{R}$ are both positive definite functions and $t_f>t_0$.
    
    In addition, the quadrotor should always satisfy a safety requirement and control bounds 
    \begin{equation}
        \label{eq: safe problem}
        h(\bm{x}(t)) \geq 0, \forall{t} \in [t_0,t_f],
    \end{equation}
    \begin{equation}
        \label{eq: control bounds problem}
        0 < f_{z} \leq f_{max}, \forall{t} \in [t_0,t_f],
    \end{equation}
    where $h:\mathbb{R}^{13}\rightarrow\mathbb{R}$ is continuously differentiable and has relative degree $\rho \in \mathbb{N}$ with respect to the quadrotor~\eqref{eq: quadcopter dynamics}, and $f_z = f_{Bz}$ as we drop the frame indices for simplicity.
\end{problem}

Several challenges must be addressed to effectively tackle the abvoe problem: a) ensuring safety without incurring excessive computational cost; b) achieving both optimality and safety. The remainder of this paper is devoted to address Problem~\ref{prob: problem formulation} with these challenges.

\section{METHODOLOGY}
\label{sec:methodology}

To address Problem~\ref{prob: problem formulation}, a cascaded control framework is proposed, consisting of an outer-loop position MPC controller and an inner-loop nonlinear attitude controller, as depicted in Fig~\ref{fig: controller structure}. The outer-loop MPC controller imposes constraints on the acceleration to account for thrust limitations of the quadrotor, while incorporating high order control barrier functions (HOCBFs) at the first prediction step to ensure safety constraints are satisfied. The output of the outer-loop controller is then transformed into the desired attitude angles, as required by the inner-loop controller, through differential flatness~\cite{kumardifferentialflatness, Faesslerdifferentialflatness}.
\subsection{Outer-loop MPC Controller}
\subsubsection{Model}
The outer-loop MPC controller provides desired accelerations for the inner-loop controller, only considering position dynamics \eqref{eq: position-dot-eq}~\eqref{eq: velocity-dot-eq}. However, given the differential flatness property, the desired accelerations must be twice differentiable to get desired attitudes and angular velocities, which will be shown in Section~\ref{subsec:conversion}. Therefore, we augment the position dynamics \eqref{eq: position-dot-eq}~\eqref{eq: velocity-dot-eq}. Consider the following augmented system:
\begin{subequations}
    \label{eq: augment position dynamics}
    \begin{align}
        \dot{\bm{p}} &= \bm{v},  \label{eq: augment postion dynamics dot p} \\
        \dot{\bm{v}} &= -\bm{D}\bm{v} +\bm{a}_v, \label{eq: augment postion dynamics dot v} \\
        \dot{\bm{a}}_v &= \bm{j}_v, \label{eq: augment postion dynamics dot ad} \\
        \dot{\bm{j}}_v &= \bm{s}_v, \label{eq: augment postion dynamics dot j}
    \end{align}
\end{subequations}
where $\bm{p},\bm{v},\bm{a}_v,\bm{j}_v\in \mathbb{R}^3$, $\bm{a}_v$ is a virtual state to replace $\bm{g} + \frac{1}{m}\bm{R}(\bm{q})\bm{f}$, and $\bm{s}_v \in \mathbb{R}^3$ is considered as the input.

To employ system \eqref{eq: augment position dynamics} in MPC, we discretize it with a fixed time step into the following form:
\begin{equation}
    \label{eq: discreted augment position dynamics}
    \bm{z}_{k+1} = A\bm{z}_{k} + B\bm{s}_{v,k}, k \in \mathbb{N}_0,
\end{equation}
where $\bm{z}_{k} = \begin{bmatrix} \bm{p}_k^\top & \bm{v}_k^\top & \bm{a}_{v,k}^\top & \bm{j}_{v,k}^\top \end{bmatrix}^\top$ and $\bm{s}_{v,k}$ are the state and the input at time step $k$, respectively. $A \in \mathbb{R}^{12 \times 12}$ and $B \in \mathbb{R}^{12 \times 3}$ can be obtained analytically.

In addition, the virtual state $\bm{a}_v$ must satisfy some constraints to satisfy the thrust limitations~\eqref{eq: control bounds problem}. Given the fact that $\lVert\bm{R}(\bm{q})\bm{f}\lVert=f_z$,
we can determine the constraints on $\bm{a}_v$, i.e.
\begin{equation}
    \label{eq: thrust cone constraints}
    m\lVert \bm{a}_v-\bm{g} \rVert \leq f_{max}.
\end{equation}

\begin{figure}[t]
    \centering
    \includegraphics[scale=0.35]{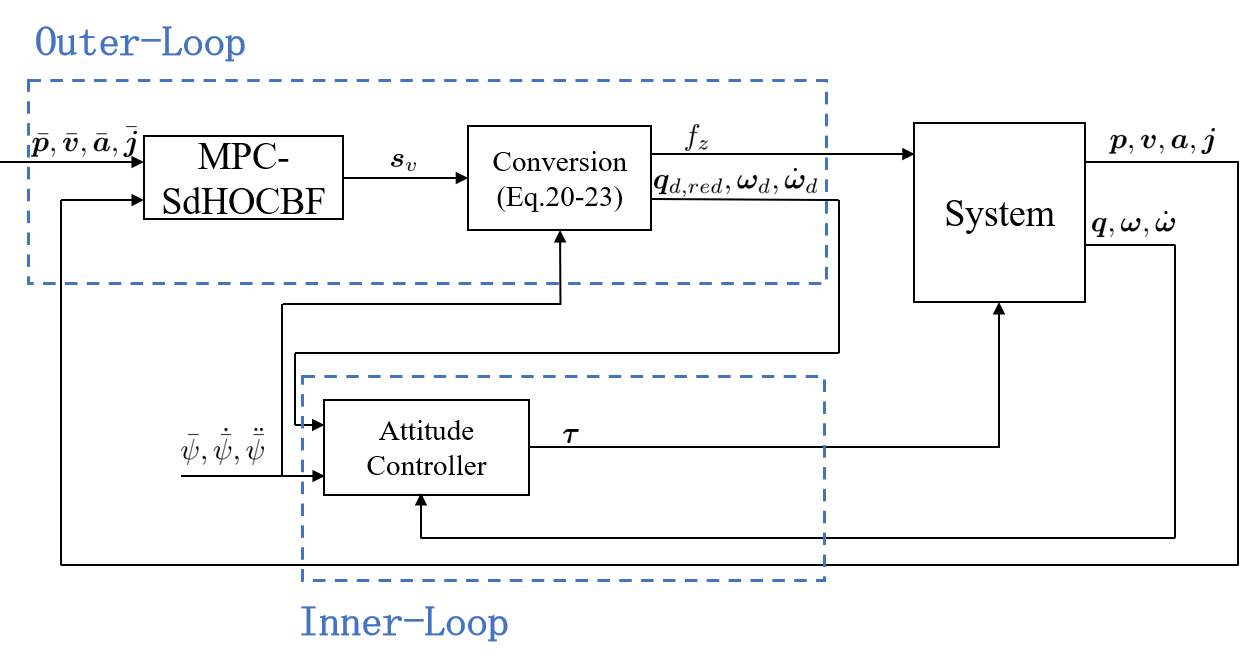}
    \caption{Diagram of the proposed cascaded controller for safe quadrotor control. MPC-SdHOCBF takes position states to generate $\bm{s}_v$, which is converted into thrust $f_z$ and angular states. The attitude controller then uses the angular states to compute torques $\bm{\tau}$.}     
    \label{fig: controller structure}
\end{figure}

\subsubsection{HOCBFs for Sampled Data System}
In this paper, we employ HOCBFs \eqref{eq: HOCBF} to ensure safety of quadrotor based on the augmented system~\eqref{eq: augment position dynamics}. However, HOCBFs \eqref{eq: HOCBF} are formulated for continuous-time systems, only guaranting the safety at current time, while the system \eqref{eq: discreted augment position dynamics} used in MPC framework is a sampled-data system with zero-order-hold (ZOH) inputs. Therefore, it's necessary to modify the HOCBFs to guarantee the safety of such systems.

Given an affine control system \eqref{eq:affine sys}, to apply Thm.~\ref{thm: hocbf}, we must ensure that \eqref{eq: HOCBF} is satisfied for all $t \geq t_0$. However, in sampled-data system with ZOH inputs, we can only get the state $\bm{x}$ at $t=kT,k \in \mathbb{N}_0$ for a fixed time step $T$ and the inputs $\bm{u}(t)$ remain constant for $t \in [kT,(k+1)T), k \in \mathbb{N}_0$. Thus, given a safe set $\mathcal{C}=\lbrace \bm{x} \in \mathbb{R}^n:h(\bm{x}) \geq 0 \rbrace$,  we need to find $\bm{u}_k$ to make the following constraint satisfied:
\begin{equation}
    \label{eq: sampled h constraints}
    h(\bm{x}) \geq 0, \forall t \in[kT,(k+1)T).
\end{equation}

To achieve this, we first consider the case of $h(\bm{x})$ of relative degree one with respect to \eqref{eq:affine sys}. We need to find a compensation term $\phi(\bm{x},T)$, such that any $\bm{u}_k$ satisfying
\begin{equation}
    \label{eq: sampled-data CBF}
    L_fh(\bm{x}_k) + L_gh(\bm{x}_k)\bm{u}_k \geq -\alpha(h(\bm{x}_k)) + \phi(\bm{x}_k,T),
\end{equation}
can guarantee \eqref{eq: sampled h constraints}.

Inspired by~\cite{Andrewzohcbf} and~\cite{Breedenzohcbf}, we define the compensation term $\phi(\bm{x},T)$ as 
\begin{equation}
    \label{eq: hocbf difference}
    \begin{aligned}
    \phi(\bm{x},T) = &\inf_{\bm{z} \in \mathcal{R}(T,\bm{x}), \bm{u} \in \mathcal{U}} L_fh(\bm{z}) + L_gh(\bm{z})\bm{u} + \alpha(h(\bm{z})) \\ 
                &- L_fh(\bm{x}) - L_gh(\bm{x})\bm{u} - \alpha(h(\bm{x})),
    \end{aligned}
\end{equation}
where $\mathcal{R}(T,\bm{x})$ denotes the set of states reachable from $\bm{x}$ in time interval $T$. Take \eqref{eq: hocbf difference} into \eqref{eq: sampled-data CBF}, we can get
\begin{equation}
    \label{eq: one order sampled data cbf}
    L_fh(\bm{x}) + L_gh(\bm{x})\bm{u}_k \geq -\alpha(h(\bm{x})),\forall t \in [kT,(k+1)T), 
\end{equation}  
which can guarantee \eqref{eq: sampled h constraints} and the resulting $\bm{u}_k,k\in \mathbb{N}_0$ renders $\mathcal{C}$ forward invariant along the trajectories of \eqref{eq:affine sys}.   

We now extend the above formulation to $h(\bm{x})$ of any relative degree with respect to \eqref{eq:affine sys}. For a HOCBF $h(\bm{x})$ of relative degree $\rho$, we define a series of functions and sets same as in \eqref{eq: hocbf functions series} and \eqref{eq: hocbf sets series}. Then, we have the following results:
\begin{definition}[SdHOCBF]
    \label{def: zohhocbf}
    Let $h_i:\mathbb{R}^n \rightarrow \mathbb{R}, i\in \lbrace 1,\cdots,\rho-1\rbrace$ be defined by \eqref{eq: hocbf functions series} and $\mathcal{C}_i, i\in \lbrace0,\cdots,\rho-1\rbrace$ be defined by \eqref{eq: hocbf sets series}. $h: \mathbb{R}^n \rightarrow \mathbb{R}$ is a sampled-data high order control barrier of relative degree $\rho$ for system~\eqref{eq:affine sys} with step-time $T$ if there exists class $\mathcal{K}$ functions $\alpha_{\rho}:\mathbb{R}\rightarrow \mathbb{R}$ such that 
    \begin{equation}
        \label{eq: zohhocbf definition}
        \begin{aligned}
        \sup_{\bm{u}_k \in \mathcal{U}} &[L_fh_{\rho-1}(\bm{x}_k) + L_gh_{\rho-1}(\bm{x}_k)\bm{u}_k + \phi(\bm{x}_k,T)\\
            &+ \alpha_{\rho} (h_{\rho-1}(\bm{x}_k))] \geq 0, 
        \end{aligned}
    \end{equation}
    \begin{equation}
        \label{eq: zohhocbf error}
        \phi(\bm{x}_k,T) = \inf_{\bm{z} \in \mathcal{R}(T,\bm{x}_k), \bm{u} \in \mathcal{U}} H(\bm{z},\bm{u})-H(\bm{x}_k,\bm{u}),
    \end{equation}
    \begin{equation}
        \label{eq: hocbf metric}
        H(\bm{x},\bm{u}) = L_fh_{\rho-1}(\bm{x}) + L_gh_{\rho-1}(\bm{x})\bm{u} + \alpha_{\rho} (h_{\rho-1}(\bm{x})),
    \end{equation}
    for all $\bm{x}_k \in \mathcal{C}_0 \cap \mathcal{C}_1 \cap \cdots \cap \mathcal{C}_{\rho-1}$.
\end{definition}
\begin{theorem}
    \label{thm: zohhocbf theorem}
    Let $h: \mathbb{R}^n \rightarrow \mathbb{R}$ defined by Def.~\ref{def: zohhocbf} with set $\mathcal{C} = \mathcal{C}_0 \cap \cdots \cap \mathcal{C}_{\rho-1}$. If $\bm{x}_{0} \in \mathcal{C}$, then any $\bm{u}_k$ satisfying \eqref{eq: zohhocbf definition} renders $\mathcal{C}$ forward invariant for system \eqref{eq:affine sys}. 
\end{theorem}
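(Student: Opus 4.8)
The plan is to reproduce the forward-invariance argument behind Theorem~\ref{thm: hocbf}, with the compensation term $\phi$ supplying the one extra ingredient that is missing in the sampled-data setting: a way to pass from the sampling instant $kT$, where the constraint~\eqref{eq: zohhocbf definition} is imposed, to the whole inter-sample interval $[kT,(k+1)T)$, where the ZOH input $\bm u_k$ is actually applied. I would argue by induction on the sampling index $k\in\mathbb N_0$, carrying the hypothesis $\bm x(kT)\in\mathcal C$, which is also exactly what Def.~\ref{def: zohhocbf} requires for~\eqref{eq: zohhocbf definition} to be meaningful at step $k$.

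The crucial sampled-data step is the following. Fix $k$, assume $\bm x_k:=\bm x(kT)\in\mathcal C$, and let $\bm u_k\in\mathcal U$ satisfy~\eqref{eq: zohhocbf definition}, i.e.\ $H(\bm x_k,\bm u_k)+\phi(\bm x_k,T)\ge 0$ with $H$ as in~\eqref{eq: hocbf metric}. For every $t\in[kT,(k+1)T)$ the solution of~\eqref{eq:affine sys} under the constant input $\bm u_k$ reaches $\bm x(t)$ from $\bm x_k$ within time $T$ using an admissible input, so $\bm x(t)\in\mathcal R(T,\bm x_k)$. Substituting $\bm z=\bm x(t)$ and $\bm u=\bm u_k$ into the infimum~\eqref{eq: zohhocbf error} gives $\phi(\bm x_k,T)\le H(\bm x(t),\bm u_k)-H(\bm x_k,\bm u_k)$, which combined with~\eqref{eq: zohhocbf definition} yields $H(\bm x(t),\bm u_k)\ge 0$ for all $t\in[kT,(k+1)T)$. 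Since $\bm u$ is held at $\bm u_k$ on this interval, $L_fh_{\rho-1}(\bm x(t))+L_gh_{\rho-1}(\bm x(t))\bm u_k=\dot h_{\rho-1}(\bm x(t))$, so this is exactly the differential inequality $\dot h_{\rho-1}(\bm x(t))\ge-\alpha_\rho\!\big(h_{\rho-1}(\bm x(t))\big)$ on $[kT,(k+1)T)$.

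Next I would close the induction with the comparison lemma together with the nested HOCBF recursion. Since $\bm x_k\in\mathcal C\subseteq\mathcal C_{\rho-1}$ we have $h_{\rho-1}(\bm x(kT))\ge 0$, and $y\equiv 0$ solves $\dot y=-\alpha_\rho(y)$ because $\alpha_\rho\in\mathcal K$; the comparison lemma on $[kT,(k+1)T)$ then gives $h_{\rho-1}(\bm x(t))\ge 0$ there, and by continuity also at $t=(k+1)T$. To obtain $\bm x(t)\in\mathcal C_i$ for $i<\rho-1$ as well, I use the standard descent: for $i\le\rho-1$ the relation~\eqref{eq: hocbf functions series} reads $h_i=\dot h_{i-1}+\alpha_i(h_{i-1})$ (the Lie term along $g$ vanishes below relative degree $\rho$), so $h_{\rho-1}(\bm x(t))\ge 0$ implies $\dot h_{\rho-2}(\bm x(t))\ge-\alpha_{\rho-1}(h_{\rho-2}(\bm x(t)))$ on $[kT,(k+1)T]$, and since $h_{\rho-2}(\bm x(kT))\ge 0$ the comparison lemma gives $h_{\rho-2}(\bm x(t))\ge 0$ there; iterating down through $h_{\rho-3},\dots,h_0=h$ shows $\bm x(t)\in\mathcal C$ for all $t\in[kT,(k+1)T]$, in particular $\bm x((k+1)T)\in\mathcal C$. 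As $\bm x_0\in\mathcal C$ by assumption, induction over $k$ yields $\bm x(t)\in\mathcal C$ for all $t\ge t_0$, i.e.\ forward invariance of $\mathcal C$.

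I expect the main obstacle to be the second step: arguing cleanly that the realized ZOH trajectory stays inside $\mathcal R(T,\bm x_k)$ over the entire open interval and that the single fixed $\bm u_k$ may be plugged into the inf over $\bm u\in\mathcal U$ to extract the bound on $H(\bm x(t),\bm u_k)$. A minor technical wrinkle is that $\dot h_{\rho-1}(\bm x(\cdot))$ jumps at the sampling times because $\bm u$ does, so the differential inequality only holds almost everywhere; this is harmless since $t\mapsto h_{\rho-1}(\bm x(t))$ is absolutely continuous, which is all the comparison lemma needs. It is also worth recording as a standing assumption that $\phi(\bm x_k,T)$ is finite (the reachable set over a finite horizon $T$ is bounded), so that~\eqref{eq: zohhocbf definition} is well posed.
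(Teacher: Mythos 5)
Your proposal is correct and follows essentially the same route as the paper's proof: you lower-bound $H(\bm{x}(t),\bm{u}_k)-H(\bm{x}_k,\bm{u}_k)$ by $\phi(\bm{x}_k,T)$ using the fact that the ZOH trajectory stays in $\mathcal{R}(T,\bm{x}_k)$, deduce the pointwise HOCBF inequality $\dot h_{\rho-1}\geq-\alpha_\rho(h_{\rho-1})$ on each inter-sample interval, and then conclude forward invariance of $\mathcal{C}_0\cap\cdots\cap\mathcal{C}_{\rho-1}$ by iterating over sampling steps. The only difference is presentational: the paper delegates the final step to Theorem~\ref{thm: hocbf}, whereas you unfold the comparison-lemma descent through the nested functions $h_{\rho-1},\dots,h_0$ and make the induction over $k$ explicit, which is a slightly more self-contained (and arguably cleaner) write-up of the same argument.
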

\begin{proof}
    If $h(\bm{x})$ is a SdHOCBF for system \eqref{eq:affine sys} defined by Def.~\ref{def: zohhocbf}, then for all $t \in [kT,(k+1)T),k \in \mathbb{N}_0$, it holds that
    \[
        \begin{aligned}
            &\quad L_fh_{\rho-1}(\bm{x}) + L_gh_{\rho-1}(\bm{x})\bm{u}_k \\ 
            &= H(\bm{x},\bm{u}_k) - \alpha_{\rho}(h_{\rho-1}(\bm{x}))\\
            &= H(\bm{x},\bm{u}_k) - H(\bm{x},\bm{u}_k) + H(\bm{x},\bm{u}_k) - \alpha_{\rho}(h_{\rho-1}(\bm{x})) \\
            &\geq H(\bm{x},\bm{u}_k) + \phi(\bm{x}_k,T) -\alpha_{\rho}(h_{z,\rho-1}(\bm{x}))\\
            &\geq -\alpha_{\rho}(h_{z,\rho-1}(\bm{x})).
        \end{aligned}
    \]
    By Thm. \ref{thm: hocbf}, since $\bm{x}_k \in \mathcal{C} \subseteq \mathcal{C}_{\rho-1}$, we have $\bm{x}(t) \in \mathcal{C}_{\rho-1}, \forall t \in [t_0, t_0+T)$. Iteratively, we can get $\bm{x}(t) \in \mathcal{C}_{i-1}, \forall t \in [t_0, t_0+T), i\in \lbrace0,\cdots,\rho-1\rbrace$. Therefore, the set $\mathcal{C}$ is forward invariant for system \eqref{eq:affine sys} for $t \in [t_0,t_0+T)$.
\end{proof}
\begin{remark}
  The reachable set $\mathcal{R}(T, \bm{x}_k)$ is generally difficult to compute exactly and online, we can only obtain a more conservative superset of $\mathcal{R}(T, \bm{x}_k)$ to replace it. However, since the system \eqref{eq: discreted augment position dynamics} we used in MPC is linear, whose exact reachable set  $\mathcal{R}(T, \bm{x}_k)$ can be computed online, we can therefore also compute $\phi(\bm{x}_k,T)$ analytically.
\end{remark}

\subsubsection{MPC Formulation} 
Based on above construction, we define following MPC Formulation:
\begin{equation}
    \label{eq: mpc controller}
    \begin{aligned}
        \min_{S_{v,k}} \quad &l_N(\bm{z}_{N|k},\bar{\bm{z}}_{N|k}) + \sum_{i=0}^{N-1} l(\bm{z}_{i|k},\bm{s}_{v,i|k},\bar{\bm{z}}_{i|k}, \bar{\bm{s}}_{v,i|k}), \\
        s.t.\quad &\bm{z}_{i+1|k} = A\bm{z}_{i|k} + B\bm{s}_{v,i|k}, i \in \lbrace 0,\cdots, N-1 \rbrace, \\
        & \bm{z}_{0|k} = \bm{z}_{k}, \\
        & \bm{s}_{v,min} \leq \bm{s}_{v,i|k} \leq  \bm{s}_{v,max}, i \in \lbrace 0,\cdots, N-1 \rbrace, \\  
        & m\Vert \bm{a}_{v,i|k} - \bm{g} \Vert \leq f_{max}, i \in \lbrace 0,\cdots, N-1 \rbrace, \\
        & L_fh_{\rho-1}(\bm{z}_{k}) + L_gh_{\rho-1}(\bm{z}_{k})\bm{s}_{v,0|k} + \phi(\bm{z}_{k},T) \\ 
        &+ \alpha_{\rho} (h_{\rho-1}(\bm{z}_{k})) \geq 0, 
    \end{aligned}
\end{equation}
where $S_{v,k} = [\bm{s}_{v,0|k}^\top, \cdots, \bm{s}_{v,N-1|k}^\top]^\top$ denotes the control sequence from MPC.
 $h(\cdot)$ is a SdHOCBF of relative degree $\rho$ with respect to~\eqref{eq: augment position dynamics}, Let $S_{v,k}^\star=[\bm{s}_{v,0|k}^{\star \top}, \cdots, \bm{s}_{v,N-1|k}^{\star \top}]^\top$ be the optimal solution of \eqref{eq: mpc controller}. At every execution step, only $\bm{s}_v = \bm{s}_{v,0|k}^\star$ is applied to~\eqref{eq: augment position dynamics}
. $l_N(\cdot),l(\cdot)$ are positive definite functions and we use $l_N(\bm{z},\bar{\bm{z}}) = (\bm{z}-\bar{\bm{z}})^\top P(\bm{z}-\bar{\bm{z}})$ and $l(\bm{z},\bm{s}_v,\bar{\bm{z}},\bar{\bm{s}}_v)=(\bm{z}-\bar{\bm{z}})^\top Q(\bm{z}-\bar{\bm{z}})+(\bm{s}_v-\bar{\bm{s}}_v)^\top R(\bm{s}_v-\bar{\bm{s}}_v)$ in this paper, where $P,Q \in \mathbb{R}^{12 \times 12},R\in \mathbb{R}^{3\times 3}$ are positive definite matrix and $\bar{\bm{z}}$ and $\bar{\bm{s}}_v$ are the reference states and inputs, respectively. 

\begin{theorem}
    \label{thm: mpc safety}
    The MPC controller \eqref{eq: mpc controller} is a second-order cone program (SOCP) and renders set $\mathcal{C}=\lbrace \bm{z} \in \mathbb{R}^{12}:h(\bm{z}) \geq 0 \rbrace$ forward invariant for system \eqref{eq: augment position dynamics}.
\end{theorem}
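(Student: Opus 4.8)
The plan is to establish the two claims of Thm.~\ref{thm: mpc safety} separately: that \eqref{eq: mpc controller} is an SOCP, and that the closed loop obtained by applying $\bm{s}_v=\bm{s}_{v,0|k}^\star$ renders $\mathcal{C}$ forward invariant for \eqref{eq: augment position dynamics}.

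For the SOCP claim, I would go through \eqref{eq: mpc controller} term by term. The stage and terminal costs are the positive-definite quadratics $l,l_N$ fixed below \eqref{eq: mpc controller}; factoring $P=P_L^\top P_L$, $Q=Q_L^\top Q_L$, $R=R_L^\top R_L$, the total cost is a sum of squared Euclidean norms, hence a convex quadratic in the decision variables. Introducing an epigraph variable $\gamma$ together with a constraint of the form $\lVert \bm{w} \rVert^2 \le \gamma$ — which is equivalent to the second-order cone constraint $\lVert [\,2\bm{w}^\top,\ \gamma-1\,]^\top \rVert \le \gamma+1$ — replaces the quadratic objective by the linear objective $\gamma$. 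The prediction dynamics $\bm{z}_{i+1|k}=A\bm{z}_{i|k}+B\bm{s}_{v,i|k}$ and $\bm{z}_{0|k}=\bm{z}_k$ are affine equalities, the bounds $\bm{s}_{v,min}\le\bm{s}_{v,i|k}\le\bm{s}_{v,max}$ are affine inequalities, and each thrust constraint $m\lVert\bm{a}_{v,i|k}-\bm{g}\rVert\le f_{max}$ is a second-order cone constraint by definition. Finally, in the SdHOCBF constraint the measured state $\bm{z}_k$ is a fixed parameter, so $L_fh_{\rho-1}(\bm{z}_k)$, $\alpha_\rho(h_{\rho-1}(\bm{z}_k))$ and $\phi(\bm{z}_k,T)$ are constants (the last one computable online because the prediction model \eqref{eq: discreted augment position dynamics} is linear, so its reachable set is known exactly) and $L_gh_{\rho-1}(\bm{z}_k)$ is a constant row vector; hence this constraint is affine in $\bm{s}_{v,0|k}$. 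Collecting these observations, \eqref{eq: mpc controller} minimizes a linear objective subject only to affine and second-order cone constraints, i.e., it is an SOCP.

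For forward invariance I would argue by induction over the sampling intervals, under the standing assumption — implicit in the statement — that \eqref{eq: mpc controller} is feasible at every step. Suppose $\bm{z}_k\in\mathcal{C}$. By feasibility, the optimizer $S_{v,k}^\star$ satisfies the last constraint of \eqref{eq: mpc controller}; with $\bm{x}_k=\bm{z}_k$ and $\bm{u}_k=\bm{s}_v=\bm{s}_{v,0|k}^\star$ this is exactly inequality \eqref{eq: zohhocbf definition} of Def.~\ref{def: zohhocbf}. Applying $\bm{s}_v$ as a ZOH input to \eqref{eq: augment position dynamics} over $[kT,(k+1)T)$ is then precisely the setting of Thm.~\ref{thm: zohhocbf theorem}, so $\bm{z}(t)\in\mathcal{C}$ for all $t\in[kT,(k+1)T)$. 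Since $h$ is continuous, $\mathcal{C}$ is closed and the trajectory is continuous, so $\bm{z}_{k+1}=\lim_{t\to(k+1)T^-}\bm{z}(t)\in\mathcal{C}$. Restarting the argument at step $k+1$ and taking $\bm{z}_0\in\mathcal{C}$ as the base case yields $\bm{z}(t)\in\mathcal{C}$ for all $t\ge t_0$, i.e., $\mathcal{C}$ is forward invariant.

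I expect the main obstacle to be the coupling between the SdHOCBF constraint and the remaining constraints. Def.~\ref{def: zohhocbf} only guarantees that \emph{some} $\bm{u}_k\in\mathcal{U}$ meets the SdHOCBF inequality, not that such a $\bm{u}_k$ is jointly compatible with the box and thrust-cone constraints, so the forward-invariance claim really does need (recursive) feasibility of \eqref{eq: mpc controller} as a hypothesis; I would either make this assumption explicit or add a brief argument for it. A more routine point is to make the epigraph/Cholesky reformulation of the quadratic cost precise, and to confirm that retaining the prediction states as decision variables (with the affine dynamics as equalities) keeps the objective a genuine convex quadratic, so that the SOCP classification is unambiguous.
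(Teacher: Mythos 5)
Your proposal is correct and follows essentially the same route as the paper's own proof: classify every cost and constraint as linear or second-order cone to get the SOCP claim, then apply Thm.~\ref{thm: zohhocbf theorem} on each sampling interval with the ZOH input $\bm{s}_{v,0|k}^\star$ and iterate to obtain forward invariance. You supply details the paper leaves implicit --- the epigraph reformulation of the quadratic cost, the continuity argument giving $\bm{z}_{k+1}\in\mathcal{C}$ at sample instants, and the explicit recursive-feasibility hypothesis --- all of which are legitimate refinements rather than a different argument.
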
   
\begin{proof}
    All the constraints and costs in \eqref{eq: mpc controller} are linear or second-order cone, thus the optimization formulation is a SOCP. At every execution step, only $\bm{s}_v = \bm{s}_{v,0|k}^\star$ is applied to~\eqref{eq: augment position dynamics}, which can ensure $h(\bm{z}) \geq 0, \forall t \in [kT,(k+1)T)$ when $h(\bm{z}_k)\geq 0$, given Thm.~\ref{thm: zohhocbf theorem}. Iteratively, the close-loop trajectories along the system \eqref{eq: augment position dynamics} always satisfy $h(\bm{z}) \geq 0$ if $h(\bm{z}_0) \geq 0$. Then $C$ is forward invarint for \eqref{eq: augment position dynamics}.
\end{proof}
\begin{remark}
While the input constraints in \eqref{eq: mpc controller} can be relaxed to $m(\bm{a}_{v,i|k}-\bm{g}) \leq \frac{f_{max}}{\sqrt{3}}\bm{1}, \forall i \in \lbrace 0,\cdots,N-1\rbrace$, thereby yielding a quadratic program (QP), such a formulation is conservative. Both QP and SOCP are convex optimization problems that admit polynomial-time solutions via interior-point methods. Although QP solvers typically exhibit superior computational efficiency, in our setting, the solution time of SOCP and QP are practically indistinguishable, a point that will be corroborated by experiments in Sec.~\ref{sec: simulations and discussions}. 
\end{remark}
\subsection{Inner-loop Controller}
\label{subsec:conversion}
Given $\bm{s}_{v,0|k}^\star$, we can get the desired position state by 
\begin{equation}
    \label{eq: desired augment states}
    \bm{z}_{d} = A\bm{z}_{k} + B\bm{s}_{v,0|k}^\star.
\end{equation}

We can then get desired thrust $f_{z}$ for system~\eqref{eq: quadcopter dynamics}
and desired attitude $\bm q_d$ by $\bm{z}_d =[\bm{p}_d^\top , \bm{v}_d^\top , \bm{a}_{v,d}^\top , \bm{j}_{v,d}^\top]^\top$ from~\eqref{eq: desired augment states}
and $\bar{\psi}$ from reference trajectory
\begin{subequations}
    \label{eq: desired attitude}
    \begin{align}
        f_{z} &= m\Vert \bm{a}_{v,d} - \bm{g} \Vert, \label{eq: desired thrust} \\
        \bm{q}_d &= \bm{q}_{d,yaw} \odot \bm{q}_{d,red}, \label{eq: desired attitude2} \\
        \bm{q}_{d,red} &= \begin{bmatrix} \cos(\frac{\varphi_{d,red}}{2}) & \bm{n}_{d,red}^\top\sin(\frac{\varphi_{d,red}}{2})\end{bmatrix}^\top, \label{eq: desired reduced attitude} \\
        \varphi_{d,red} &= -\arctan(\frac{\sqrt{a_{v,d,x}^2+a_{v,d,y}^2}}{a_{v,d,z}+g}), \label{eq: desired reduced attitude phi} \\
        \bm{n}_{d,red} &= \frac{1}{\sqrt{a_{v,d,x}^2+a_{v,d,y}^2}}\begin{bmatrix}-a_{v,d,y} & a_{v,d,x} & 0 \end{bmatrix}^\top, \label{eq: desired reduced attitude axis} \\
        \bm{q}_{d,yaw} &= \begin{bmatrix} \cos(\frac{-\bar{\psi}}{2}) & 0 & 0 & \sin(\frac{-\bar{\psi}}{2}) \end{bmatrix}^\top. \label{eq: desired yaw attitude}
    \end{align}
\end{subequations}
It shows that $\bm{q}_{d,red}$ is fully determined by $\bm{a}_{v,d}$, while $\bm{q}_{d,yaw}$ only depends on $\bar{\psi}$. Therefore, in our proposed MPC framework \eqref{eq: mpc controller}, the yaw orientation $\psi$, which is irrelevant to obstacle avoidance, is excluded in our consideration.

To get the desired angular velocities and accelerations, we first convert the virtual states into the real ones by 
\begin{subequations}
    \label{eq: desired states}
    \begin{align}
        \bm{a}_d &= -\bm{D}\bm{v}_d + \bm{a}_{v,d}, \label{eq: desired acc} \\
        \bm{j}_d &= -\bm{D}\bm{a}_d + \bm{j}_{v,d}, \label{eq: desired jerk} \\
        \bm{s}_d &= -\bm{D}\bm{j}_d + \bm{s}_{v,0|k}^\star, \label{eq: desired snap}
    \end{align}
\end{subequations}
where $\bm{a}_d$, $\bm{j}_d$, $\bm{s}_d$ are desired accelerations, jerks and snaps, respectively. Then, combined with the $\bar{\psi}$, $\dot{\bar{\psi}}$, $\ddot{\bar{\psi}}$ from reference trajectory, we can get the desired angular velocities $\bm{\omega}_d$ and angular accelerations $\dot{\bm{\omega}}_d$ by
\begin{subequations}
    \label{eq: desired omega and domega}
    \begin{align}
        \bm{\omega}_d &=  (\bm{R}(\bm{q}_d)^\top\dot{\bm{R}}(\bm{q}_d))^\vee, \label{eq: get desired omega} \\
        \dot{\bm{\omega}}_d &=  (\bm{R}(\bm{q}_d)^\top\ddot{\bm{R}}(\bm{q}_d)-(\bm{\omega}_d^\wedge)^2)^\vee, \label{eq: get desired domega}
    \end{align}
\end{subequations}
where $(\cdot)^\wedge$ and $(\cdot)^\vee$ are the standard isomorphism between vector and skew-symmetric matrix. Thus, we have all elements for the inner-loop controller. 

We then employ the attitude controller proposed in~\cite{Dariotiltprior}, where the quadrotor's attitude error is split into a reduced attitude error and a yaw error to prioritize the quadrotor's ability to achieve desired acceleration rather than orientation.

Given a desired attitude $\bm{q}_d$ and quadrotor's current attitude $\bm{q}$, we define the attitude error $\bm{q}_e$, reduced attitude error $\bm{q}_{e,red}$ and yaw error $\bm{q}_{e,yaw}$ as follows:
\begin{subequations}
    \label{eq: quaternion error definition}
    \begin{align}
        \bm{q}_e &= \bm{q}_d \odot \bm{q} = \begin{bmatrix}q_{e,w} &q_{e,x} &q_{e,y} &q_{e,z}\end{bmatrix}^\top, \label{eq: quaternion error} \\ 
        \bm{q}_{e,yaw} &= \frac{1}{\sqrt{q_{e,w}^2+q_{e,z}^2}} \begin{bmatrix} 
            q_{e,w} & 0 & 0 & q_{e,z}
        \end{bmatrix}^\top, \label{eq: quaternion yaw error} \\
        \bm{q}_{e,red} &= \bm{q}_{e,yaw}^{-1} \odot \bm{q}_{e}. \label{eq: quaternion red errow}
    \end{align}
\end{subequations}
Subsequently, the desired torques $\bm{\tau}$ can be obtained by the following control law:
\begin{equation}
    \label{eq: inner-loop controller}
    \bm{\tau} = k_{p,xy}\tilde{\bm{q}}_{e,red} + k_{p,z}sgn(q_{e,w}) \tilde{\bm{q}}_{e,yaw} +\bm{K}_d\bm{\omega}_e + \bm{\tau}_{ff},
\end{equation}
where $\bm{\omega}_e=\bm{R}(\bm{q}_e)\bm{\omega}_d - \bm{\omega}$ is the angular velocity error, $\bm{\tau}_{ff}=\bm{J}\dot{\bm{\omega}}_e - \bm{J}\bm{\omega} \times \bm{\omega}$ is the feed-forward term and 
$\tilde{\bm{q}}_{e,red}$ and $\tilde{\bm{q}}_{e,yaw}$ are the imaginary part of $\bm{q}_{e,red}$ and $\bm{q}_{e,yaw}$, i.e.  $\tilde{\bm{q}}=[q_x , q_y , q_z]^\top$. $\bm{K}_d=diag(k_{d,x},k_{d,y},k_{d,z})$ and $k_{p,xy}, k_{p,z}, k_{d,x},k_{d,y},k_{d,z}$ are all positive constants. 

By setting a relative high $k_{p,xy}$ than $k_{p,yaw}$, this controller can prioritize achieving safe trajectory tracking ability enabled by the outer-loop MPC controller \eqref{eq: mpc controller} and achieve almost globally asymptotically stable about $(\bm{q}_{e,red},\bm{\omega}_{e,red}) = (\bm{q}_I,0)$ and locally asymptotically stable about $(\bm{q}_{e,yaw},\bm{\omega}_{e,yaw}) = (\bm{q}_I,0)$. 

\begin{table}[t]
\caption{Quadrotor Parameters}
\label{tab: quadrotor parameters}
\begin{center}
\begin{tabular}{|c|c|c|}
\hline
Symbol & Definition & Value\\
\hline
$m$ & Mass of quadrotor & $0.468kg$\\
$J_{xx},J_{yy}$ & Inertia of $x,y$-axis & $4.856 \times 10^{-3}kg\cdot m^2$ \\
$J_{zz}$ & Inertia of $z$-axis & $ 8.801 \times 10^{-3}kg\cdot m^2$ \\
$T_{max}$ & Maximum total thrust & $12N$ \\
$\bm{D}$ & Aerodynamic drag matrix & $diag(0.25,0.25,0.25) s^{-1}$ \\
\hline
\end{tabular}
\end{center}
\end{table}

\section{SIMULATIONS AND DISCUSSIONS}
\label{sec: simulations and discussions}
In this section, a comparative study with existing methods in the literature is carried out to further analyze the unique characteristics and advantages of our method.

All simulations were conducted in MATLAB R2024a, we used CasADi~\cite{casadi} to formulate the optimization problems and solve them by IPOPT~\cite{ipopt}. The quadrotor parameters used in the simulations are presented in Table~\ref{tab: quadrotor parameters}. Two scenarios are designed to validate the effectiveness of the proposed method compared to others: 1) a circular reference trajectory with two cylindrical obstacles; 2) a circular reference trajectory with a narrow gap.  

\begin{figure*}
    \centering
    \begin{subfigure}{0.24\textwidth}
        \centering
        \includegraphics[scale=0.24]{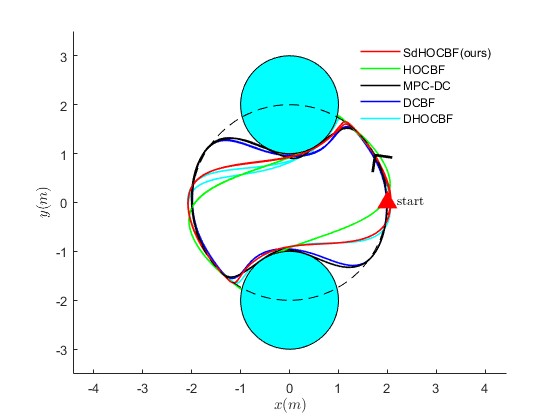}
        \caption{}
        \label{fig: circle trajectories}
    \end{subfigure}
    \begin{subfigure}{0.24\textwidth}
        \centering
        \includegraphics[scale=0.24]{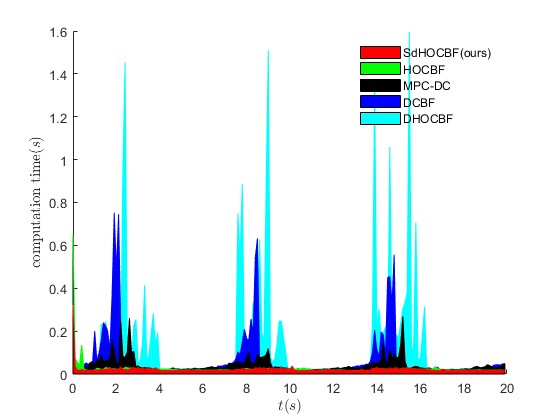}
        \caption{}
        \label{fig: computation time}
    \end{subfigure}
    \begin{subfigure}{0.24\textwidth}
        \centering
        \includegraphics[scale=0.24]{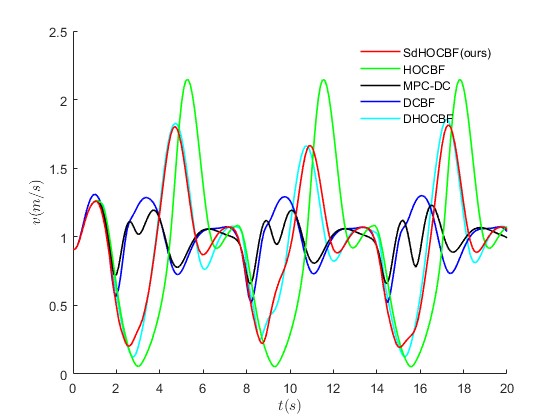}
        \caption{}
        \label{fig: velocity}
    \end{subfigure}
    \begin{subfigure}{0.24\textwidth}
        \centering
        \includegraphics[scale=0.24]{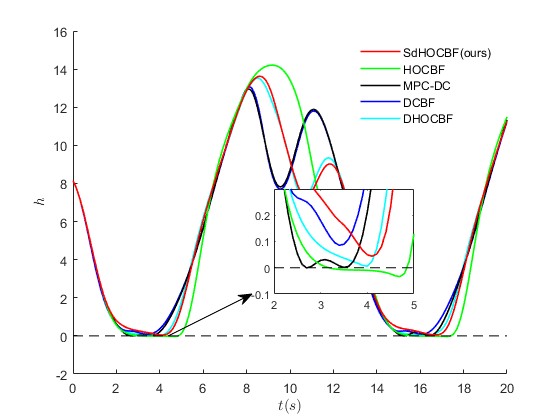}
        \caption{}
        \label{fig: hvalue}
    \end{subfigure}
    \caption{Trajectories, computation time, HOCBF values and velocities from different controllers. (a) Quadrotor's trajectories generated by different controllers. The black dash line, two cyan circles, red triangle and black arrow denote the reference trajectory, cylindrical obstacles, the start point the direction of trajectories, respectively. (b) Computation time to slove MPC optimization of different controllers. (c) the velocities of quadrotor by different controllers. (d) the value of $h(\bm{x})=x^2+(y-2)^2-1$. $h(\bm{x}) \geq 0$ imply the safety of quadrotor.}
    \label{fig:comparison for simulation 1}
\end{figure*}
\begin{figure*}
    \centering
    
    \begin{subfigure}[t]{0.24\linewidth}
        \centering
        \includegraphics[scale=0.23]{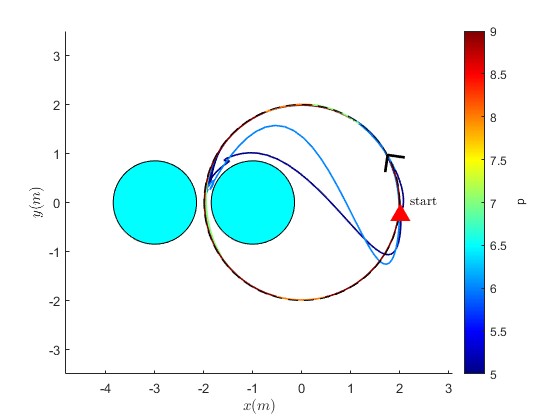}
        \caption{}
        \label{fig: narrow gap sdhocbf}
    \end{subfigure}
    \begin{subfigure}[t]{0.24\linewidth}
        \centering
        \includegraphics[scale=0.23]{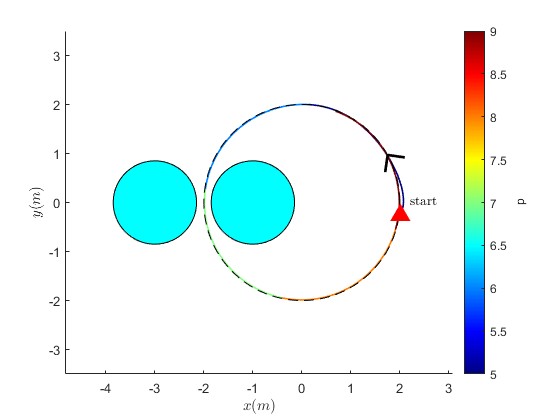}
        \caption{}
        \label{fig: narrow gap hocbf}
    \end{subfigure} 
    \begin{subfigure}[t]{0.24\linewidth}
        \centering
        \includegraphics[scale=0.23]{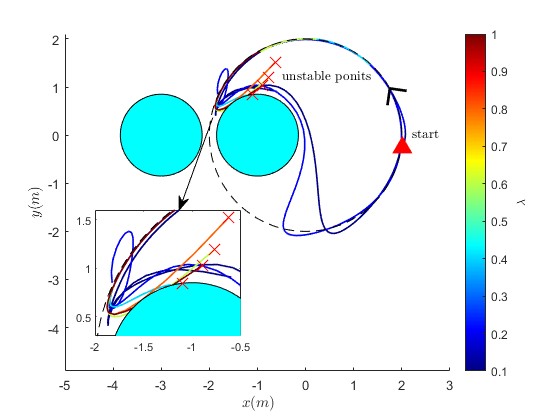}
        \caption{}
        \label{fig: narrow gap dcbf}
    \end{subfigure}   
    \begin{subfigure}[t]{0.24\linewidth}
        \centering
        \includegraphics[scale=0.23]{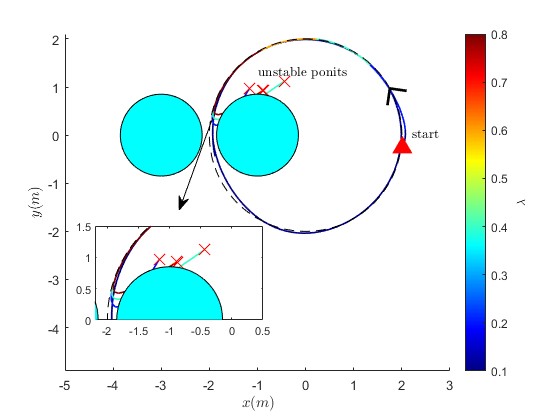}
        \caption{}
        \label{fig: narrow gap dhocbf}
    \end{subfigure}    
    \caption{Trajectories generated by different controllers for the circle with narrow gap. The black arrows indicate the direction along the trajectories. (a) Trajectories generated by SdHOCBF with $p=5,6,7,8,9$. (b) Trajectories generated by HOCBF with $p=5,6,7,8,9$. (c) Trajectories generated by DCBF with $\lambda=0.1,0.2,0.4,0.6,0.8,1$. When $\lambda=1$, DCBF is equivalent to MPC-DC. (d) Trajectories generated by DHOCBF with $\lambda=0.1,0.2,0.4,0.6,0.8$.}
    \label{fig: narrow gap trajectories}
\end{figure*}
For all simulations, the inner-loop controller is executed at a frequency of $1kHz$  and its gains are set to $k_{p,xy}=24,k_{p,z}=0.7,\bm{K}_d = diag(0.8,0.8,0.3)$. Fixed step-time $T=0.1s$ and prediction horizon $N=20$ are used in the outer-loop MPC and the follwing MPC matrix is used for all simulations: $\bm{Q} = diag(100,100,100,1,1,1,1,1,1,1,1,1)$, $\bm{P} =10\bm{Q}$, $\bm{R} = diag(0.01,0.01,0.01)$. We set $\bm{s}_{a,max}=-\bm{s}_{a,min}=[40,40,40]^\top$ to restrict virtual inputs in \eqref{eq: augment position dynamics}.

We compare our proposed method with the following four approaches: a) HOCBF~\cite{xiaohocbf}: remove SdHOCBF constraints in \eqref{eq: mpc controller} and use the quadratic promgram(QP) as a safe filter; b) MPC-DC~\cite{fraschMPCDC}: replace SdHOCBF in \eqref{eq: mpc controller} with distance constraints; c) DCBF~\cite{zengmpccbf}: replace SdHOCBF in \eqref{eq: mpc controller} with discrete control barrier functions (DCBFs) $h(\bm{z}_{i+1|k})-h(\bm{z}_{i|k}) \geq -\lambda h(\bm{z}_{i|k}),\forall i \in \lbrace0,\cdots, N-1\rbrace$; d) DHOCBF~\cite{liumpchocbf}: replace SdHOCBF in \eqref{eq: mpc controller} with discrete control barrier functions (DCBFs) $h(\bm{z}_{i+1|k})-h(\bm{z}_{i|k}) \geq -\lambda h(\bm{z}_{i|k}),i=0,1,3,4$.

\subsubsection{Circle with two cylinders}
To compare the behaviour of the different control methods, we design a cicular reference trajectory defined by flat output $\bar{\bm{y}}(t) = [2\cos(0.5t),2\sin(0.5t),1,0.5t]^T$ and two cylindrical obstacles which can be described by HOCBF as $h_1(\bm{x}) = x^2+(y-2)^2-1$ and $h_2(\bm{x}) = x^2+(y+2)^2-1$, which both have relative degree 4 for system \eqref{eq: augment position dynamics}. For DCBF and DHOCBF, we use $\lambda=0.2$, while for HOCBF and our proposed method, we use all class $\mathcal{K}$ functions $\alpha(x) = px$ with $p=5$ for a fair comparison. All the simulations start at $\bm{x}(0)=[2,-0.25,1,0.4,0.82,0,0,0,0,1,0.5,-0.4,0]^T$ and the simulation duration is set to $T_d=20s$.

The trajectories generated by different methods are shown in Fig.~\ref{fig: circle trajectories}. It shows that DCBF detects obstacles and returns to the reference trajectory faster than all other methods. In contrast, HOCBF and MPC-DC only becomes effective when the quadrotor is very close to the obstacles. SdHOCBF and DHOCBF exhibit intermediate behaviors between these two extremes. However, Fig.~\ref{fig: computation time} shows that the computation time of DCBF and DHOCBF are significantly higher than those of HOCBF and SdHOCBF. Moreover, their computation time are unstable and occasionally fail to meet real-time requirements. This indicates that the performance of DCBF and DHOCBF comes at the cost of computation complexity. In comparison, SdHOCBF achieves similar desired properties without incurring additional computational burden.

Moreover, Fig.~\ref{fig: hvalue} shows that only HOCBF can not guarantee the safety of quadrotor which validates the necessity of introducing SdHOCBFs. In Fig.~\ref{fig: velocity}, the velocity from HOCBF exhibits large fluctuations, which is in contrast to the other three methods. This highlights that incorporating sdHOCBFs as constraints in the MPC framework can achieve a better trade-off between optimality and safety.

\subsubsection{Circle with a narrow gap}
In this simulation, we keep all settings the same as in the previous simulation, except for replacing the two obstacles with $h_1(\bm{x})=(x+3)^2+y^2-0.85^2$ and $h_2(\bm{x})=(x+1)^2+y^2-0.85^2$ in order to create a narrow gap. To evaluate the effectiveness of different methods in this scenario, we conducted simulations for DHOCBF and DCBF with $\lambda = 0.1,0.2,0.4,0.6,0.8$. For DCBF, we also tested $\lambda=1$, which is equivalent to MPC-DC. For HOCBF and SdHOCBF, we use $p=5,6,7,8,9$.

The simulation results are shown in Fig.~\ref{fig: narrow gap trajectories}.  Fig.~\ref{fig: narrow gap dcbf} demonstrates that both MPC-DC and DCBF fail to navigate through the narrow gap. In contrast, Fig.~\ref{fig: narrow gap sdhocbf}~\ref{fig: narrow gap hocbf}~\ref{fig: narrow gap dhocbf} indicate that SdHOCBF, HOCBF, and DHOCBF are all capable of passing through the gap. Among them, HOCBF performs the best, successfully navigating the gap for all tested values of $p$. SdHOCBF ranks second, managing to pass through the gap in most cases, though it may choose to avoid the gap when $p$ is too small. DHOCBF shows the weakest performance, as the quadrotor often becomes unstable in front of the obstacles for most values of $\lambda$.

\section{CONCLUSIONS}
\label{sec: conclusions}
This paper presents a cascaded control framework for safe trajectory tracking of quadrotors. The proposed approach integrates a MPC scheme in the outer loop with SdHOCBFs constraints applied at the first prediction step to generate desired twice differentiable accelerations ensuring safety. The inner-loop controller is designed to prioritize tracking of the safe reference accelerations over yaw orientation. Comprehensive comparisons with existing methods demonstrate the effectiveness of our approach in achieving a better trade-off between safety and optimality without incurring additional computational burden.

\bibliographystyle{IEEEtran}
\bibliography{references}
\end{document}